\documentclass[10pt]{article}
\usepackage{}

\usepackage{pgf}
\usepackage[square, comma, sort&compress, numbers]{natbib}
\usepackage[english]{babel}
\usepackage[utf8]{inputenc}
\usepackage{datetime}
\usepackage{geometry}
\geometry{a4paper,scale=0.8}
\usepackage{array}
\usepackage{xspace}

\usepackage{amsmath,amsfonts,amssymb,amsopn,amsthm,amscd}
\theoremstyle{plain}
\usepackage{graphicx} %use graph format
\usepackage{epstopdf}
\usepackage{enumerate}
\allowdisplaybreaks[4]

\def\Box{\vcenter{\vbox{\hrule\hbox{\vrule
     \vbox to 8.8pt{\hbox to 10pt{}\vfill}\vrule}\hrule}}}

\usepackage{indentfirst}
\setlength{\parindent}{1em}

     %kernel
   %rank

   %
       %
           %
        %

\newtheorem{thm}{Theorem}[section]
\newtheorem{lem}[thm]{Lemma}

\newtheorem{prop}[thm]{Proposition}

\newtheorem{remark}{Remark}

\begin{document}

\begin{center}

{\large  \bf Minimal Linear Codes Constructed from hierarchical posets with two levels}

\vskip 0.8cm
{\small Xia Wu$^1$, Wei Lu\footnote{Supported by NSFC (Nos. 11971102, 12171241)

  MSC: 94B05, 94A62}}$^*$, Xupu Qin$^1$, Xiwang Cao$^2$\\

{\small $^1$School of Mathematics, Southeast University, Nanjing
210096, China}\\
{\small $^2$Department of Math, Nanjing University of Aeronautics and Astronautics, Nanjing 211100, China}\\
{\small E-mail:
wuxia80@seu.edu.cn, luwei1010@seu.edu.cn,  2779398533@qq.com, xwcao@nuaa.edu.cn}\\
{\small $^*$Corresponding author. (Email: luwei1010@seu.edu.cn)}
\vskip 0.8cm
\end{center}

{\bf Abstract:}
J. Y. Hyun, et al. (Des. Codes
Cryptogr., vol. 88, pp. 2475-2492, 2020)  constructed some optimal and minimal binary linear codes generated by one or two order ideals in hierarchical posets of two levels. At the end of their paper, they left an open problem: it also should be interesting to investigate the cases of more than two orders in hierarchical posets with two levels or many levels. In this paper, we use the  geometric method to determine the minimality of linear codes generated by any orders in hierarchical posets with two levels. We generalize their cases of  one or two orders  to any orders and determine the minimality of the linear codes completely.

{\bf Index Terms:} Linear code, Minimal code, Order ideal, Hierarchical posets.

 {\bf MSC:} 94B05, 94A62
\section{\bf Introduction}
 Let $q$ be a prime power and $\mathbb{F}_q$ the finite field with $q$ elements. Let $n$ be a positive integer and $\mathbb{F}_q^n$  the vector space with dimension $n$ over $\mathbb{F}_q$. In this paper, all vector spaces are over $\mathbb{F}_q$ and  all vectors are row vectors.  For a vector $\mathbf{v}=(v_1, \dots, v_n)\in\mathbb{F}_q^n$, let Suppt$(\mathbf{v})$ $:= \{1 \leq i\leq n : v_i\neq 0\}$ be the support of $\mathbf{v}$. The \emph{Hamming weight} of vector of $\mathbf{v}$ is wt$(\mathbf{v})$:=\# $\rm{Suppt}(\mathbf{u})$.  For any two vectors $\mathbf{u}, \mathbf{v}\in \mathbb{F}_q^n$, if $\rm{Suppt}(\mathbf{u})\subseteq \rm{Suppt}(\mathbf{v})$, we say that $\mathbf{v}$ covers $\mathbf{u}$ (or $\mathbf{u}$ is covered by $\mathbf{v}$) and write $\mathbf{u}\preceq\mathbf{v}$. Clearly, $a\mathbf{v}\preceq \mathbf{v}$ for all $a\in \mathbb{F}_q$.

 An $[n,k]_q$ linear code $\mathcal{C}$ over $\mathbb{F}_q$ is a $k$-dimensional subspace of $\mathbb{F}_q^n$. Vectors in $\mathcal{C}$ are called codewords. A codeword $\mathbf{c}$ in a linear code $\mathcal{C}$ is called \emph{minimal} if $\mathbf{c}$ covers only the codewords $a\mathbf{c}$ for all $a\in \mathbb{F}_q$, but no other codewords in $\mathcal{C}$. That is to say, if a codeword $\mathbf{c}$  is minimal in  $\mathcal{C}$, then for any codeword $\mathbf{b}$ in $\mathcal{C}$, $\mathbf{b}\preceq \mathbf{c}$ implies that $\mathbf{b}=a\mathbf{c}$ for some $a\in \mathbb{F}_q$.
  %Minimal codewords can be used \cite{Massey1993, M1995} in linear codes-based access structures in secret sharing schemes (SSS).
  For an arbitrary linear code $\mathcal{C}$, it is  hard  to determine the set of its minimal codewords \cite{BMT1978, BN1990}.

If every codeword in  $\mathcal{C}$ is minimal, then $\mathcal{C}$ is said to be a \emph{minimal linear code}. Minimal linear codes have interesting applications in secret sharing \cite{CDY2005, CCP2014, DY2003, M1995, YD2006} and secure two-party computation \cite{ABCH1995, CMP2013}, and could be decoded with a minimum distance decoding method \cite{AB1998}. Searching for minimal linear codes has been an interesting research topic in coding theory and cryptography.

Up to now, there are two approaches to study minimal linear codes. One is  algebraic method and the other  is geometric method. The algebraic method is based on the Hamming weight of the codewords. %, many works have been done.
%The \emph{Hamming weight} of a vector $\mathbf{v}$ is wt$(\mathbf{v}):=\#\rm{Suppt}(\mathbf{v})$.
In \cite{AB1998}, Ashikhmin and Barg gave a sufficient condition  on the minimum and maximum nonzero Hamming weights for a linear code to be minimal.
%\begin{lem}\label{Ashikhmin-Barg}{$($\rm {Ashikhmin-Barg} \cite{AB1998})}
%A linear code $\mathcal{C}$ over $\mathbb{F}_q$ is minimal if
%$$\frac{w_{\rm min}}{w_{\rm max}}>\frac{q-1}{q},$$
%where $w_{\rm min}$ and $w_{\rm max}$ denote the minimum and maximum nonzero Hamming weights in the code $\mathcal{C}$, respectively.
%\end{lem}
Inspired by Ding's work \cite{D2015, D2016},  many minimal linear codes with $\frac{w_{\rm min}}{w_{\rm max}}>\frac{q-1}{q}$  have been
constructed by selecting the proper defining sets or from functions over finite fields (see \cite{DD2015,HY2016,LCXM2018,SLP2016,SGP2017,TLQZ2016,WDX2015,X2016,YY2017,ZLFH2016}). Cohen et al. \cite{CMP2013} provided an example to show that the condition $\frac{w_{\rm min}}{w_{\rm max}}>\frac{q-1}{q}$  is not necessary for a linear code to be minimal. Ding, Heng and Zhou \cite{DHZ2018, HDZ2018}  generalized this  sufficient condition and derived a sufficient and necessary condition on all Hamming weights for a given linear code to be
minimal.
%\begin{lem}\label{Heng-Ding-Zhou}{$($\rm { Ding-Heng-Zhou}\cite{DHZ2018,HDZ2018})}
%A linear code $\mathcal{C}$ over $\mathbb{F}_q$ is minimal if and only if
%$$\sum_{c\in\mathbb{F}_q^{*}}\rm{wt}(\mathbf{a}+c\mathbf{b})\neq (q-1)\rm{wt}(\mathbf{a})-\rm{wt}(\mathbf{b})$$
%for any $\mathbb{F}_q$-linearly independent codewords $\mathbf{a},\mathbf{b}\in \mathcal{C}$.
%\end{lem}
Based on this sufficient and necessary condition, Ding et al. presented three infinite families of minimal binary linear codes with $\frac{w_{\rm min}}{w_{\rm max}}\leq\frac{1}{2}$ in \cite{DHZ2018} and an infinite family of minimal ternary linear codes with $\frac{w_{\rm min}}{w_{\rm max}}<\frac{2}{3}$ in \cite{HDZ2018}, respectively. In \cite{ZYW2018}, Zhang et al. constructed four families of minimal binary linear codes
with $\frac{w_{\rm min}}{w_{\rm max}}\leq\frac{1}{2}$ from Krawtchouk polynomials. %Very recently, Bartoli and
%Bonini \cite{BB2019} provided infinite families of minimal linear codes; also
In \cite{XQ2019}, Xu and Qu constructed three classes  of minimal linear codes with $\frac{w_{\rm min}}{w_{\rm max}}<\frac{p-1}{p}$ for any odd prime $p$. In \cite{LY2019,MQ2019,XQC2020}, minimal linear codes constructed from functions are introduced.  When using the algebraic method to prove the minimality of a given linear code, one needs to know all the Hamming weights in the code, which is very difficult in general. Even all the
Hamming weights are known, it is also hard to use the algebraic method   to prove the minimality.

Recently, minimal linear codes were characterized by geometric approaches in \cite{ABN2019,LW2019,TQLZ2019}. In \cite{ABN2019,TQLZ2019}, the authors used the cutting blocking sets to study the minimal linear codes. In \cite{LW2019}, the authors used the basis of linear space to study the minimal linear codes. Based on these results, it is  easier to construct minimal linear codes or to prove that some linear codes are minimal, see \cite{ABNR2019,BB2019,BB2021,BCMP2021,MB2019,HN}.

%All linear codes can be constructed by the following way. Let $k\leq n$ be two positive integers. Let $D:=\{\mathbf{d}_1,...,\mathbf{d}_n\}$  be a multiset, where $\mathbf{d}_1,...,\mathbf{d}_n\in \mathbb{F}_q^k$. Let  $$\mathcal{C}=\mathcal{C}(D)=\{{\mathbf{c}\mathbf{(x)}}=\mathbf{c}(\mathbf{x};D)=(\mathbf{xd}_1^{T},...,\mathbf{xd}_n^{T}), \mathbf{x}\in \mathbb{F}_q^k\}.$$
%Then $\mathcal{C}(D)$ is an $[n$,rank$(D)]_q$ linear code. We always study the minimality of $\mathcal{C}(D)$ by considering some appropriate  multisets $D$.

In \cite{HKWY2020}, the authors used the algebraic method to construct some optimal and minimal binary linear codes generated by one or two order ideals in hierarchical posets of two levels. At the end of their paper, they left an open problem: it also should be interesting to investigate the cases of more than two orders in hierarchical posets with two levels or many levels. In this paper, we use the geometric method in \cite{LW2019} to completely determine the minimality of binary linear codes generated by  any  order ideals in hierarchical posets with two levels.

The rest of this paper is organized as follows. In Section \ref{section Preliminaries}, we introduce basic concepts on posets and hierarchical posets with two levels. We also introduce the sufficient and necessary condition in \cite{LW2019}.  In Section \ref{section hierarchical posets}, we completely determine the minimality of  minimal linear codes  constructed from hierarchical posets with two levels. In Section \ref{section Concluding remarks}, we conclude this paper.

\section{\bf Preliminaries}\label{section Preliminaries}
\subsection{Posets}

Let $[n] $ be the set $\{1,2,...,n\}$ and $2^{[n]}$  the power set of  $[n]$. There exists a bijection between  $\mathbb{F}_{2}^{n}$  and  $2^{[n]}$, defined by  $\mathbf{v} \mapsto \operatorname{Suppt}(\mathbf{v}) $. \textbf{Throughout this paper, we will identify a vector in  $\mathbb{F}_{2}^{n}$  with its support}.

We say that  $\mathbb{P}=([n], \preceq)$  is a partially ordered set (abbreviated as a poset) if  $\mathbb{P}$  is a partial order relation on  $[n]$ , that is, for all  $i, j, k \in[n]$  we have that:  $(i)\ i \preceq i $; $(i i)\ i \preceq j$  and  $j \preceq i$  imply  $i=j$ ; $(i i i)\ i \preceq j$  and  $j \preceq k$  imply  $i \preceq k$.

Let  $\mathbb{P}=([n], \preceq)$  be a poset. Two distinct elements  $i$  and $ j$  in $ [n]$  are called comparable if either  $i \preceq j$  or  $j \preceq i$ , and incomparable otherwise.

A nonempty subset  $I $ of  $\mathbb{P}$  is called an order ideal if  $j \in I$  and  $i \preceq j$  imply  $i \in I$. Let $\mathcal{O}_{\mathbb{P}}$ denote the set of all order ideals of $\mathbb{P}.$ For a subset  $E$  of  $\mathbb{P}$, the smallest order ideal of  $\mathbb{P}$  containing  $E$  is denoted by  $\langle E\rangle $. For an order ideal $ I$  of  $\mathbb{P}$, we use  $I(\mathbb{P})$  to denote the set of order ideals of  $\mathbb{P} $  which are contained in  $I$. Let  $\mathcal{I}=\left\{I_{1}, \ldots, I_{m}\right\}$  be a subset of  $\mathcal{O}_{\mathbb{P}}$. We define

$$\mathcal{I}(\mathbb{P})=\left\{J \in \mathcal{O}_{\mathbb{P}}: J \subseteq I \in \mathcal{I}\right\}=\bigcup_{i=1}^{m} I_{i}(\mathbb{P}). $$

\subsection{Hierarchical posets with two levels}
Let  $m$  and  $n$  be positive integers with  $m \leq n.$ We say that  $\mathbb{H}(m, n)=([n], \preceq)$  is a hierarchical poset with two levels if  $[n]$  is the disjoint union of two incomparable subsets  $U=\{1, \ldots, m\}$  and $ V=\{m+1, \ldots, n\} $, and  $i \prec j$  whenever  $i \in U$  and  $j \in V $.

\begin{lem}\label{hmn}\cite[Lemma 4.1]{HKWY2020}
Every order ideal of  $\mathbb{H}(m, n)$  can be expressed by $ A \cup B $ for  $A \subseteq[m],\ B \subseteq   [n] \backslash[m]$, and one of the following holds:
$(i) \ B=\emptyset ;\ (i i)\ B \neq \emptyset$  and $ A=[m]$.
\end{lem}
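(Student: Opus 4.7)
The plan is to decompose an arbitrary order ideal of $\mathbb{H}(m,n)$ along the two levels $U=\{1,\ldots,m\}$ and $V=\{m+1,\ldots,n\}$, and then extract the restriction on the decomposition from the single cross-level relation $i\prec j$ ($i\in U$, $j\in V$). Concretely, let $I$ be an arbitrary order ideal of $\mathbb{H}(m,n)$, and set $A:=I\cap[m]$ and $B:=I\cap([n]\setminus[m])$. Since $[m]$ and $[n]\setminus[m]$ partition $[n]$, we immediately get $I=A\cup B$ with $A\subseteq[m]$ and $B\subseteq[n]\setminus[m]$, so the existence of a decomposition of the required shape is automatic. The content of the lemma is therefore to rule out pairs $(A,B)$ with $B\neq\emptyset$ and $A\subsetneq[m]$.

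I would then split on whether $B$ is empty. If $B=\emptyset$, we are in case (i), and any $A\subseteq[m]$ is admissible: the elements of $U$ are pairwise incomparable, so no element of $[n]\setminus A$ is $\preceq$ any element of $A$, and the order-ideal condition places no extra constraint. If instead $B\neq\emptyset$, pick any $j\in B\subseteq V$. By the definition of the hierarchical poset $\mathbb{H}(m,n)$, for every $i\in U$ we have $i\prec j$. Since $I$ is an order ideal and $j\in I$, each such $i$ must also lie in $I$; hence $U\subseteq I$ and therefore $A=I\cap[m]=[m]$, placing us in case (ii).

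The argument is essentially a direct unwinding of definitions, and there is no real obstacle. The only point to keep in mind is that within each of the two levels the elements are pairwise incomparable, so the order-ideal condition imposes no constraint inside a single level; all constraints come from the cross-level relations $i\prec j$ for $i\in U$, $j\in V$, and a single witness $j\in B$ already forces the entire bottom level into $I$.
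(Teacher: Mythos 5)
Your proof is correct. Note that the paper itself does not prove this lemma---it is quoted with a citation to \cite[Lemma 4.1]{HKWY2020}---so there is no in-paper argument to compare against; your direct unwinding of the definitions (set $A:=I\cap[m]$, $B:=I\cap([n]\setminus[m])$, and observe that any witness $j\in B$ forces $U\subseteq I$ via $i\prec j$ for all $i\in U$) is exactly the standard argument one would expect, and it establishes the stated claim in full. One small side remark: your aside that ``any $A\subseteq[m]$ is admissible'' in case (i) should exclude $A=\emptyset$, since the paper defines order ideals to be nonempty subsets; but this concerns the converse direction, which the lemma does not assert, so it does not affect the validity of your proof.
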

Let $l:=n-m$. Then $\mathbb{F}_2^n=\mathbb{F}_2^m\oplus\mathbb{F}_2^l$.  Since we  identify a vector in  $\mathbb{F}_{2}^{n}$  with its support in this paper, we use $\alpha=(\beta,\gamma)\in \mathbb{F}_2^n$ to denote the subset of $[n]$ where $\beta\in\mathbb{F}_2^m$ and $\gamma\in\mathbb{F}_2^l$. Then Lemma \ref{hmn} can be described as the following Lemma.
\begin{lem}\label{hmn1}
If $\alpha=(\beta,\gamma)$ is an order ideal of $\mathbb{H}(m,n)$, then one of the following holds:

$(i)$\ $\gamma=\mathbf{0};$

$(ii)$\ $\gamma\neq \mathbf{0}$ and $\beta=\mathbf{1}.$
\end{lem}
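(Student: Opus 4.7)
The plan is to observe that Lemma \ref{hmn1} is essentially a reformulation of Lemma \ref{hmn} under the identification of a vector in $\mathbb{F}_2^n$ with its support, but to give a short direct argument it is cleaner to unwind the definitions and appeal only to the order-ideal property together with the hierarchical structure of $\mathbb{H}(m,n)$.

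First, I would fix an order ideal $\alpha\subseteq [n]$ and write it uniquely as $\alpha=(\beta,\gamma)$ with $\beta\subseteq U=[m]$ and $\gamma\subseteq V=[n]\setminus[m]$, so that as a vector in $\mathbb{F}_2^n=\mathbb{F}_2^m\oplus\mathbb{F}_2^l$ the component in $\mathbb{F}_2^m$ has support $\beta$ and the component in $\mathbb{F}_2^l$ has support $\gamma$. Then I would split into the two cases suggested by the statement, according to whether $\gamma=\mathbf{0}$ or not.

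The case $\gamma=\mathbf{0}$ is immediate: it is exactly conclusion $(i)$, and no further argument is needed. In the case $\gamma\neq\mathbf{0}$, the key step is to pick any $j\in\gamma$; then $j\in V$ and $j\in\alpha$. For an arbitrary $i\in U$, the defining property of the hierarchical poset $\mathbb{H}(m,n)$ gives $i\prec j$. Since $\alpha$ is an order ideal and $j\in\alpha$, the order-ideal property forces $i\in\alpha$. As $i\in U$ was arbitrary, this yields $U\subseteq\alpha$, equivalently $\beta=[m]=\mathbf{1}$, which is conclusion $(ii)$.

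There is essentially no real obstacle here; the whole content is the translation between the set-theoretic language of Lemma \ref{hmn} and the vector language $(\beta,\gamma)\in\mathbb{F}_2^m\oplus\mathbb{F}_2^l$. The only point that deserves an explicit sentence is the justification that $\gamma\neq\mathbf{0}$ together with the hierarchical relation $U\prec V$ forces all of $U$ into $\alpha$, and this is a one-line consequence of the order-ideal axiom. Alternatively, one could simply cite Lemma \ref{hmn} and observe that identifying subsets of $[n]$ with their indicator vectors sends conditions $(i)$ ``$B=\emptyset$'' and $(ii)$ ``$B\neq\emptyset$ and $A=[m]$'' to $(i)$ ``$\gamma=\mathbf{0}$'' and $(ii)$ ``$\gamma\neq\mathbf{0}$ and $\beta=\mathbf{1}$'', respectively.
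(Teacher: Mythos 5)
Your proof is correct, but it is more self-contained than what the paper actually does: the paper offers no proof of Lemma \ref{hmn1} at all, presenting it purely as a restatement of Lemma \ref{hmn} (which is itself cited from \cite{HKWY2020}) under the identification of a vector in $\mathbb{F}_2^n$ with its support. Your second, alternative route --- cite Lemma \ref{hmn} and check that the dictionary sends $B=\emptyset$ to $\gamma=\mathbf{0}$ and ``$B\neq\emptyset$, $A=[m]$'' to ``$\gamma\neq\mathbf{0}$, $\beta=\mathbf{1}$'' --- is therefore exactly the paper's (implicit) argument. Your primary route goes further: by picking $j\in\gamma\subseteq V$ and using $i\prec j$ for all $i\in U$ together with the order-ideal axiom to force $U\subseteq\alpha$, you in effect reprove the substantive direction of Lemma \ref{hmn} from scratch rather than importing it. What this buys is independence from the external citation and a proof that makes visible where the hierarchical structure is used; what the paper's route buys is brevity, since Lemma \ref{hmn} is already on record. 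Both arguments are complete and the case split is exhaustive, so there is nothing to fix.
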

Let $I=[m]\cup B$ be an order ideal of  $\mathbb{H}(m,n)$, where $B\neq \emptyset.$
Then $$\emph{I}(\mathbb{P})=\{(\mathbf{1},\gamma)|{\rm{Suppt}}(\gamma)\subseteq B\}\cup\{(\beta, \mathbf{0})|\beta\in \mathbb{F}_2^m\}.$$

\subsection{Minimal linear codes}
All linear codes can be constructed by the following way. Let $k\leq n$ be two positive integers. Let $D:=\{\mathbf{d}_1,...,\mathbf{d}_n\}$  be a multiset, where $\mathbf{d}_1,...,\mathbf{d}_n\in \mathbb{F}_q^k$. Let  Let  $rank(D)$ be the rank of $D$ (it equals to the dimension of the vector space $Span(D)$ over $\mathbb{F}_q$), $$\mathcal{C}=\mathcal{C}(D)=\{{\mathbf{c}\mathbf{(x)}}=\mathbf{c}(\mathbf{x};D)=(\mathbf{xd}_1^{T},...,\mathbf{xd}_n^{T}), \mathbf{x}\in \mathbb{F}_q^k\}.$$
Then $\mathcal{C}(D)$ is an $[n$, rank$(D)]_q$ linear code. We always study the minimality of $\mathcal{C}(D)$ by considering some appropriate  multisets $D$.

A new sufficient and necessary condition is presented as a main result in \cite{LW2019}, for easier reading, we list the results once more. To present the new sufficient and necessary condition in \cite{LW2019}, some concepts are needed.

For any $\mathbf{y}\in \mathbb{F}_q^k$, we define
$$H(\mathbf{y}):=\mathbf{y}^\perp=\{\mathbf{x}\in \mathbb{F}_q^k\mid\mathbf{xy}^{T}=0\},$$
$$H(\mathbf{y},D):=D\cap H(\mathbf{y})=\{\mathbf{x}\in D\mid\mathbf{xy}^{T}=0\},$$
$$V(\mathbf{y},D):={\rm{Span}}(H(\mathbf{y},D)).$$
It is obvious that $H(\mathbf{y},D)\subseteq V(\mathbf{y},D)\subseteq H(\mathbf{y})$.

\begin{prop}\label{21}
For any $\mathbf{x}, \mathbf{y}\in \mathbb{F}_q^k,\ \mathbf{c(x)}\preceq \mathbf{c(y)}$ if and only if $H(\mathbf{y},D)\subseteq H(\mathbf{x},D)$.\end{prop}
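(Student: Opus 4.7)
The plan is to unwind the definitions on both sides and show that each of the two conditions is just a reformulation of the same index-set containment. By definition, $\mathbf{c}(\mathbf{x})\preceq \mathbf{c}(\mathbf{y})$ means $\mathrm{Suppt}(\mathbf{c}(\mathbf{x}))\subseteq \mathrm{Suppt}(\mathbf{c}(\mathbf{y}))$, so the first step is to rewrite $\mathrm{Suppt}(\mathbf{c}(\mathbf{x}))$ in terms of the multiset $D$. Since the $i$-th coordinate of $\mathbf{c}(\mathbf{x})$ is $\mathbf{x}\mathbf{d}_i^T$, we have $i\in \mathrm{Suppt}(\mathbf{c}(\mathbf{x}))$ if and only if $\mathbf{d}_i\notin H(\mathbf{x})$, and because $\mathbf{d}_i\in D$ automatically, this is the same as $\mathbf{d}_i\notin H(\mathbf{x},D)$.

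Next, I would pass to complements inside the index set $\{1,\dots,n\}$. The inclusion $\mathrm{Suppt}(\mathbf{c}(\mathbf{x}))\subseteq \mathrm{Suppt}(\mathbf{c}(\mathbf{y}))$ is equivalent to the reverse inclusion of their complements, which by the previous reformulation translates precisely to: every index $i$ with $\mathbf{d}_i\in H(\mathbf{y},D)$ also satisfies $\mathbf{d}_i\in H(\mathbf{x},D)$. Interpreting $H(\mathbf{y},D)$ and $H(\mathbf{x},D)$ as indexed sub-multisets of $D$, this is exactly the statement $H(\mathbf{y},D)\subseteq H(\mathbf{x},D)$. Both directions of the equivalence follow from the same chain of reformulations, so the ``if'' and ``only if'' parts are handled simultaneously.

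I do not expect any substantive obstacle here; the only minor point of care is the multiset convention, since $D$ may have repeated elements, so the containments $H(\mathbf{y},D)\subseteq H(\mathbf{x},D)$ must be read as containments of sub-multisets indexed by $\{1,\dots,n\}$ rather than as set-theoretic containments of the underlying point set. Once that convention is fixed, the argument is a one-line translation between the support of a codeword and the set of vectors in $D$ orthogonal to the defining vector.
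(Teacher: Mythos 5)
Your proof is correct: the equivalence is exactly the contrapositive translation between $\mathrm{Suppt}(\mathbf{c}(\mathbf{x}))\subseteq \mathrm{Suppt}(\mathbf{c}(\mathbf{y}))$ and the containment of the orthogonal sub-multisets, and your care with the multiset/index convention is the right (and only) subtle point. The paper itself states Proposition~\ref{21} without proof, citing \cite{LW2019}, and your definition-unwinding argument is precisely the standard one given there, so there is nothing to add.
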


Let $\mathbf{y}\in \mathbb{F}_q^k\backslash \{\mathbf{0}\}$. The following lemma   gives a sufficient and necessary condition for the codeword $\mathbf{c(y)}\in \mathcal{C}(D)$ to be minimal. The key idea is that, instead of the weights, we consider the zeros in the codewords: $\mathbf{c(x)}\preceq\mathbf{c(y)}$ if and only if $H(\mathbf{y},D)\subseteq H(\mathbf{x},D)$.

\begin{lem}\label{sn}\cite[Theorem 3.1]{LW2019}
Let $\mathbf{y}\in \mathbb{F}_q^k\backslash \{\mathbf{0}\}$. Then the following three conditions are equivalent:\\
$(1)$ $\mathbf{c(y)}$ is minimal in $\mathcal{C}(D)$;\\
$(2)$ \rm{dim}$V(\mathbf{y},D)=k-1$;\\
$(3)$ $V(\mathbf{y},D)=H(\mathbf{y})$.
\end{lem}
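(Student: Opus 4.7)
Since $\mathbf{y}\neq\mathbf{0}$, the hyperplane $H(\mathbf{y})=\mathbf{y}^{\perp}$ has dimension $k-1$, and the containment $V(\mathbf{y},D)\subseteq H(\mathbf{y})$ is already noted before the statement. Hence a dimension count immediately gives the equivalence of (2) and (3): a subspace of a $(k-1)$-dimensional space has dimension $k-1$ if and only if it coincides with the whole space. So the substantive content is (1)$\Leftrightarrow$(3), and I plan to handle this by two short geometric arguments, both built entirely on Proposition~\ref{21}, which lets me trade the combinatorial covering relation on codewords for containment of hyperplane sections of $D$.

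For (3)$\Rightarrow$(1), assume $V(\mathbf{y},D)=H(\mathbf{y})$ and take any $\mathbf{x}\in\mathbb{F}_q^k$ with $\mathbf{c(x)}\preceq\mathbf{c(y)}$. Proposition~\ref{21} gives $H(\mathbf{y},D)\subseteq H(\mathbf{x},D)\subseteq H(\mathbf{x})$; spanning both sides of the first inclusion yields $H(\mathbf{y})=V(\mathbf{y},D)\subseteq H(\mathbf{x})$. Since $H(\mathbf{y})$ is already a hyperplane, this forces $H(\mathbf{x})=H(\mathbf{y})$ or $H(\mathbf{x})=\mathbb{F}_q^k$, so $\mathbf{x}\in\mathbb{F}_q\mathbf{y}$, and therefore $\mathbf{c(x)}=a\mathbf{c(y)}$ for some $a\in\mathbb{F}_q$. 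This establishes minimality.

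For (1)$\Rightarrow$(3), I plan the contrapositive. Suppose $V(\mathbf{y},D)\subsetneq H(\mathbf{y})$. Then $V(\mathbf{y},D)^{\perp}$ has dimension at least $2$ and contains $\mathbf{y}$, so I can pick $\mathbf{x}\in V(\mathbf{y},D)^{\perp}\setminus\mathbb{F}_q\mathbf{y}$. This gives $V(\mathbf{y},D)\subseteq H(\mathbf{x})$, whence $H(\mathbf{y},D)\subseteq D\cap H(\mathbf{x})=H(\mathbf{x},D)$. Proposition~\ref{21} then yields $\mathbf{c(x)}\preceq\mathbf{c(y)}$, and if $\mathbf{c(x)}\notin\mathbb{F}_q\mathbf{c(y)}$ this refutes minimality.

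The only delicate point I anticipate is confirming that the witness $\mathbf{c(x)}$ is genuinely not a scalar multiple of $\mathbf{c(y)}$. In the intended setting $\mathrm{rank}(D)=k$ the assignment $\mathbf{x}\mapsto\mathbf{c(x)}$ is injective, so $\mathbf{x}\notin\mathbb{F}_q\mathbf{y}$ instantly gives $\mathbf{c(x)}\notin\mathbb{F}_q\mathbf{c(y)}$; in the general case one absorbs a vector from $\mathrm{Span}(D)^{\perp}$ into $\mathbf{x}$ (this does not alter $\mathbf{c(x)}$) to reduce to the full-rank situation. Everything else reduces to a clean geometric dictionary via Proposition~\ref{21} between the poset of codewords under $\preceq$ and the poset of hyperplane sections $H(\cdot,D)$ under inclusion.
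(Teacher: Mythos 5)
The paper itself gives no proof of this lemma: it is quoted verbatim from \cite{LW2019} (Theorem 3.1 there), so there is no internal argument to compare against. Your proof of the stated equivalences is correct in the intended setting and is the natural geometric argument behind \cite{LW2019}: $(2)\Leftrightarrow(3)$ is the dimension count inside the hyperplane $H(\mathbf{y})$; $(3)\Rightarrow(1)$ follows because $\mathbf{c(x)}\preceq\mathbf{c(y)}$ gives, via Proposition \ref{21} and spanning, $H(\mathbf{y})=V(\mathbf{y},D)\subseteq H(\mathbf{x})$, which forces $\mathbf{x}\in\mathbb{F}_q\mathbf{y}$ and hence $\mathbf{c(x)}=a\mathbf{c(y)}$; and $(1)\Rightarrow(3)$ by contraposition, choosing $\mathbf{x}\in V(\mathbf{y},D)^{\perp}\setminus\mathbb{F}_q\mathbf{y}$ (possible since $\mathbf{y}\in V(\mathbf{y},D)^{\perp}$ and $\dim V(\mathbf{y},D)^{\perp}\geq 2$) and applying Proposition \ref{21} in the other direction. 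With $\operatorname{rank}(D)=k$ the map $\mathbf{x}\mapsto\mathbf{c(x)}$ is injective, so your witness indeed violates minimality; this full-rank hypothesis is satisfied in every application of the lemma in Section \ref{section hierarchical posets} (the one rank-deficient code, in Theorem \ref{32}(4), is first rewritten as a full-rank code in dimension $l$ before minimality is invoked).

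Your closing remark about the general case, however, is a genuine error. Adding a vector $\mathbf{w}\in\operatorname{Span}(D)^{\perp}$ to $\mathbf{x}$ leaves $\mathbf{c(x)}$ unchanged, so this ``absorption'' can never move $\mathbf{c(x)}$ out of $\mathbb{F}_q\mathbf{c(y)}$; no such trick can repair $(1)\Rightarrow(2)$, because that implication is simply false when $\operatorname{rank}(D)<k$. Indeed, $V(\mathbf{y},D)\subseteq\operatorname{Span}(D)$, so $\dim V(\mathbf{y},D)=k-1$ would force $\operatorname{rank}(D)=k-1$ and $V(\mathbf{y},D)=\operatorname{Span}(D)\subseteq H(\mathbf{y})$, i.e. $\mathbf{c(y)}=\mathbf{0}$; thus for rank-deficient $D$ no nonzero codeword satisfies $(2)$, whereas minimal nonzero codewords always exist (any minimum-weight codeword $\mathbf{c}$ is minimal: if $\mathbf{0}\neq\mathbf{b}\preceq\mathbf{c}$, subtracting a suitable multiple of $\mathbf{b}$ from $\mathbf{c}$ kills a coordinate without enlarging the support, so $\mathbf{c}-t\mathbf{b}=\mathbf{0}$ by minimality of weight). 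So $\operatorname{rank}(D)=k$ is an essential hypothesis of the lemma (explicit in \cite{LW2019}, implicit in the paper's statement), not a convenience one can dispense with. Apart from this side remark, which lies outside what the paper actually needs, your proof is complete and correct.
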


In the following section, we will use the above lemma to consider the minimality of linear codes constructed from hierarchical posets with two levels.

\section{Minimal linear codes constructed from hierarchical posets with two levels}\label{section hierarchical posets}
Let $t$ be a positive integer. Let $\mathbb{P}=\mathbb{H}(m, n)$ and $\mathcal{I}=\left\{I_{1}, \ldots, I_{t}\right\}$ be a subset of $\mathcal{O}_{\mathbb{P}}$, where $I_i=[m]\cup B_i$ are  order ideals of  $\mathbb{H}(m,n)$, where $B_i\neq \emptyset,$ $1\leq i\leq t.$

%Let $D^c$ be the complement of a subset $D$ of $\mathbb{H}(m, n)$.
Define
\begin{equation}
D:=(\mathcal{I}(\mathbb{P}))^c=2^{[n]}\setminus\mathcal{I}(\mathbb{P}),
 \end{equation}
 \begin{equation}
 D_0:=\{(\beta,\gamma)\mid \beta\neq \mathbf{1},\ \gamma\neq \mathbf{0}\},
  \end{equation}
 \begin{equation}
 D_1:=\{(\mathbf{1},\gamma)\mid {\rm{Suppt}}(\gamma)\nsubseteq B_i,\ 1\leqslant i\leqslant t\}.
   \end{equation}
   We can see $D=D_0\cup D_1.$ Since $D_0\subseteq D$, by the following lemma, in order to study the minimality of $\mathcal{C}(D)$, we can first study the minimality of $\mathcal{C}(D_0)$.

\begin{lem}\label{31}\cite[Propsition 4.1]{LW2019}
Let $M\subseteq N$ be two multisets with elements in $\mathbb{F}_q^k$ and {\rm{rank}}$(M)$={\rm{rank}}$(N)=k$. If $\mathcal{C}(M)$ is minimal, then $\mathcal{C}(N)$ is minimal.
\end{lem}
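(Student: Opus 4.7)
The plan is to reduce the statement to Lemma \ref{sn}, which characterizes minimality of an individual codeword $\mathbf{c}(\mathbf{y};D)$ by the equality $V(\mathbf{y},D)=H(\mathbf{y})$. Thus $\mathcal{C}(D)$ is a minimal linear code if and only if $V(\mathbf{y},D)=H(\mathbf{y})$ for every $\mathbf{y}\in\mathbb{F}_q^k\setminus\{\mathbf{0}\}$. The strategy is to fix an arbitrary nonzero $\mathbf{y}$ and compare the subspaces $V(\mathbf{y},M)$ and $V(\mathbf{y},N)$, exploiting the inclusion $M\subseteq N$.

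First I would observe the monotonicity of the construction $D\mapsto V(\mathbf{y},D)$: since $M\subseteq N$, every element $\mathbf{d}\in M$ with $\mathbf{y}\mathbf{d}^T=0$ is also an element of $N$ with $\mathbf{y}\mathbf{d}^T=0$, so $H(\mathbf{y},M)\subseteq H(\mathbf{y},N)$, and taking spans gives $V(\mathbf{y},M)\subseteq V(\mathbf{y},N)$. Second, the definitions immediately yield the upper bound $V(\mathbf{y},N)\subseteq H(\mathbf{y})$, because $H(\mathbf{y})$ is a subspace containing the generating set $H(\mathbf{y},N)$. Third, the hypothesis that $\mathcal{C}(M)$ is minimal, applied via Lemma \ref{sn}, gives $V(\mathbf{y},M)=H(\mathbf{y})$. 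Chaining these three facts yields
\[
H(\mathbf{y})=V(\mathbf{y},M)\subseteq V(\mathbf{y},N)\subseteq H(\mathbf{y}),
\]
so $V(\mathbf{y},N)=H(\mathbf{y})$, and Lemma \ref{sn} (applied in the reverse direction) tells us that $\mathbf{c}(\mathbf{y};N)$ is minimal in $\mathcal{C}(N)$.

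Before invoking Lemma \ref{sn} for $N$, I would briefly check that the hypothesis $\mathrm{rank}(N)=k$ guarantees $\mathbf{c}(\mathbf{y};N)\neq\mathbf{0}$ whenever $\mathbf{y}\neq\mathbf{0}$, since $\mathbf{c}(\mathbf{y};N)=\mathbf{0}$ would force $\mathbf{y}$ to be orthogonal to every element of $N$ and hence to $\mathrm{Span}(N)=\mathbb{F}_q^k$, contradicting $\mathbf{y}\neq\mathbf{0}$. The same remark for $M$ ensures Lemma \ref{sn} is applicable on the $M$-side. Since $\mathbf{y}\in\mathbb{F}_q^k\setminus\{\mathbf{0}\}$ was arbitrary, every nonzero codeword of $\mathcal{C}(N)$ is minimal, so $\mathcal{C}(N)$ is a minimal linear code.

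There is no real obstacle here: the argument is essentially a two-line monotonicity observation once the characterization $V(\mathbf{y},D)=H(\mathbf{y})$ is in hand. The only delicate point is bookkeeping the rank conditions to guarantee that Lemma \ref{sn} can be invoked for every nonzero $\mathbf{y}$ on both sides, which is exactly why the hypothesis $\mathrm{rank}(M)=\mathrm{rank}(N)=k$ is included in the statement.
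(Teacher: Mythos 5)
Your proof is correct. The paper states Lemma \ref{31} without proof, citing \cite[Proposition 4.1]{LW2019}, and your argument---the monotonicity $H(\mathbf{y},M)\subseteq H(\mathbf{y},N)$ followed by the sandwich $H(\mathbf{y})=V(\mathbf{y},M)\subseteq V(\mathbf{y},N)\subseteq H(\mathbf{y})$ and two applications of Lemma \ref{sn}, with the rank hypotheses used to ensure the characterization applies to every nonzero $\mathbf{y}$ on both sides---is exactly the intended argument behind the cited result.
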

In the rest of this paper, let $\mathbf{e}_i$
% $$\mathbf{e}_1=(1,0,0,...,0,0),$$
% $$\mathbf{e}_2=(0,1,0,...,0,0),$$
% $$......$$
% $$\mathbf{e}_k=(0,0,0,...,0,1),$$
denote the $i^{th}$ standard basis vector.

First we consider the minimality of $\mathcal{C}(D_0)$.
\begin{thm}\label{32}
Let $n=m+l,$ where $m,\ n,\ l\in \mathbb{N}^+$. \\
$(1)$ When $m>2$ and $l\geq 2$, $\mathcal{C}(D_0)$ is minimal.\\
$(2)$ When $m=2$ and $l\geq 2$, $\mathcal{C}(D_0)$ is not minimal.\\
$(3)$ When $m\geq2$ and $l=1$, $\mathcal{C}(D_0)$ is not minimal.\\
$(4)$  When $m=1$ and $l\geq 1$, $\mathcal{C}(D_0)$ is minimal.
\end{thm}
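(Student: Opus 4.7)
The plan is to invoke Lemma \ref{sn}: a nonzero codeword $\mathbf{c}(\mathbf{y})$ is minimal if and only if $V(\mathbf{y},D_0)=H(\mathbf{y})$. Writing $\mathbf{y}=(\mathbf{y}_1,\mathbf{y}_2)\in\mathbb{F}_2^m\oplus\mathbb{F}_2^l$, I exploit the product structure $D_0=(\mathbb{F}_2^m\setminus\{\mathbf{1}\})\times(\mathbb{F}_2^l\setminus\{\mathbf{0}\})$. The key manoeuvre is: for any fixed $\gamma_0\in H(\mathbf{y}_2)\setminus\{\mathbf{0}\}$, the vectors $(\beta,\gamma_0)$ with $\beta\in H(\mathbf{y}_1)\setminus\{\mathbf{1}\}$ lie in $H(\mathbf{y},D_0)$; their pairwise sums produce $(\beta_1+\beta_2,\mathbf{0})\in V(\mathbf{y},D_0)$, while varying $\gamma_0$ contributes $\{\mathbf{0}\}\times H(\mathbf{y}_2)$ to $V(\mathbf{y},D_0)$. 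A counting argument --- $2^{m-1}-1>2^{m-2}$ iff $m\geq 3$ --- shows that $H(\mathbf{y}_1)\setminus\{\mathbf{1}\}$ still spans $H(\mathbf{y}_1)$ as soon as $m\geq 3$, even when $\mathbf{1}\in H(\mathbf{y}_1)$.

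For case $(1)$, I subdivide on whether $\mathbf{y}_1,\mathbf{y}_2$ vanish. If exactly one of them is zero, the argument above immediately yields $V(\mathbf{y},D_0)=\mathbb{F}_2^m\times H(\mathbf{y}_2)=H(\mathbf{y})$ or $V(\mathbf{y},D_0)=H(\mathbf{y}_1)\times\mathbb{F}_2^l=H(\mathbf{y})$. If both are nonzero, $V(\mathbf{y},D_0)$ contains $H(\mathbf{y}_1)\times H(\mathbf{y}_2)$, which has codimension one in $H(\mathbf{y})$; closing the gap requires finding an auxiliary vector $(\beta_2,\gamma_2)\in H(\mathbf{y},D_0)$ with $\mathbf{y}_1\cdot\beta_2=\mathbf{y}_2\cdot\gamma_2=1$. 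Such a $\beta_2\neq\mathbf{1}$ exists precisely because $m\geq 3$ forces the affine hyperplane $\{\beta:\mathbf{y}_1\cdot\beta=1\}$ to contain at least two elements.

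For the negative statements I exhibit explicit witnesses. In case $(2)$, take $\mathbf{y}=((1,1),\mathbf{0})$: then $\mathbf{y}\cdot(\beta,\gamma)=\beta_1+\beta_2$, which vanishes on $\mathbb{F}_2^2\setminus\{\mathbf{1}\}=\{\mathbf{0},\mathbf{e}_1,\mathbf{e}_2\}$ only at $\beta=\mathbf{0}$, so $V(\mathbf{y},D_0)\subseteq\{\mathbf{0}\}\times\mathbb{F}_2^l$ has dimension at most $l=n-2<n-1$. In case $(3)$, every $\mathbf{d}\in D_0$ has last coordinate $1$, so $\mathbf{c}(\mathbf{e}_n)$ is the all-ones codeword, which covers every other nonzero codeword of the nontrivial code $\mathcal{C}(D_0)$. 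For case $(4)$, the hypothesis $m=1$ forces $D_0=\{\mathbf{0}\}\times(\mathbb{F}_2^l\setminus\{\mathbf{0}\})$; discarding the trivial first coordinate identifies $\mathcal{C}(D_0)$ with the $[2^l-1,l]_2$ simplex code, which has constant weight $2^{l-1}$, so containment of supports forces equality of codewords in $\mathbb{F}_2$, proving minimality.

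The main obstacle is the third subcase of case $(1)$ --- both $\mathbf{y}_1,\mathbf{y}_2$ nonzero --- because one must simultaneously control two spanning arguments and produce the extra vector $(\beta_2,\gamma_2)$, and each step leans on $m\geq 3$ in exactly the way that fails for $m=2$. This is also why case $(2)$ collapses: the set $\{\beta\in\mathbb{F}_2^2\setminus\{\mathbf{1}\}:(1,1)\cdot\beta=0\}$ shrinks to $\{\mathbf{0}\}$.
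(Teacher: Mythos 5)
Your proposal is correct, and it rests on the same foundations as the paper's proof: both arguments run through Lemma \ref{sn} with the identical four-way split on $(m,l)$ and the identical three-subcase split on whether $\mathbf{y}_1$, $\mathbf{y}_2$ vanish. The difference is in execution. For case $(1)$ the paper writes down, in each subcase, an explicit list of $n-1$ linearly independent vectors of $H(\mathbf{u},D_0)$ (e.g.\ $(\mathbf{e}_1,\gamma_1),\dots,(\mathbf{e}_m,\gamma_1),(\mathbf{0},\gamma_1),\dots,(\mathbf{0},\gamma_{l-1})$, with an ad hoc replacement $\beta_{m-1}\mapsto\beta_1+\beta_{m-1}$ to dodge $\mathbf{1}$), whereas you argue structurally: $V(\mathbf{y},D_0)\supseteq H(\mathbf{y}_1)\times H(\mathbf{y}_2)$ plus one auxiliary vector $(\beta_2,\gamma_2)$ with $\mathbf{y}_1\cdot\beta_2=\mathbf{y}_2\cdot\gamma_2=1$, with the spanning of $H(\mathbf{y}_1)$ by $H(\mathbf{y}_1)\setminus\{\mathbf{1}\}$ secured by the counting bound $2^{m-1}-1>2^{m-2}$. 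This makes the role of the hypothesis $m\geq 3$ more transparent than the paper's basis lists, and it localizes exactly where $m=2$ fails. For case $(2)$ your route is genuinely cleaner: the paper exhibits a covering pair $\mathbf{u}=(1,1,\mathbf{0})$, $\mathbf{v}=(1,0,\mathbf{0})$ and invokes Proposition \ref{21}, while you simply compute $\dim V(\mathbf{y},D_0)\leq l<n-1$ and kill minimality directly by Lemma \ref{sn}$(2)$. For case $(4)$ you supply the constant-weight justification that the paper leaves implicit in identifying $\mathcal{C}(D_0)$ with the simplex code. Three small nits, none fatal: in the third subcase of $(1)$, the existence of $\beta_2\neq\mathbf{1}$ in the affine hyperplane $\{\beta:\mathbf{y}_1\cdot\beta=1\}$ needs only $m\geq 2$ (that set has $2^{m-1}\geq 2$ elements) --- it is the spanning step, not the auxiliary vector, that genuinely requires $m\geq 3$; in case $(3)$ you should record that $\rank(D_0)=n\geq 3$, so the code has a nonzero codeword other than $\mathbf{c}(\mathbf{e}_n)$ itself (the paper does this by exhibiting $\mathbf{v}=(\mathbf{e}_1+\mathbf{e}_2,1)$ explicitly), since over $\mathbb{F}_2$ non-minimality of the all-ones codeword needs a third codeword; and the word ``nontrivial'' should be expanded to that rank computation.
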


\begin{proof}
(1) When $m>2$ and $l\geq 2$, rank$(\mathcal{C}(D_0))=n$. For any $\mathbf{0}\neq \mathbf{u}=(\mathbf{u_1},\mathbf{u_2})\in \mathbb{F}_2^m\oplus\mathbb{F}_2^l$, we will prove that all the codewords $\mathbf{c}(\mathbf{u})\in \mathcal{C}(D_0)$ are minimal in three different cases.

{\bf Case 1:} When $\mathbf{u_1}=\mathbf{0}$ and $\mathbf{u_2}\neq\mathbf{0}$, there exists $\{\gamma_1$,..., $\gamma_{l-1}\}$ which is linearly
independent and satisfying $\mathbf{u_2}\cdot \gamma_i=0,$ $i=1,...,l-1$. It is easy to verify that
\begin{center}
 $(\mathbf{e_1},\gamma_1),\ ...,(\mathbf{e_m},\gamma_1),(\mathbf{0},\gamma_1),...,(\mathbf{0},\gamma_{l-1})\in H(\mathbf{u},D_0)$,
\end{center}
  and this vector group is linearly independent. By Lemma \ref{sn}, $\mathbf{c}(\mathbf{u})$ is minimal.

{\bf Case 2:}  When $\mathbf{u_1}\neq\mathbf{0}$ and $\mathbf{u_2}=\mathbf{0}$, there exists $\{\beta_1,...,\beta_{m-1}\}$ which is linearly independent, and satisfying $\mathbf{u_1}\cdot \beta_i=0$. If for some $i$, $\beta_i=\mathbf{1}$, without loss of generality, we set $\beta_{m-1}=\mathbf{1}$. In this case,  we replace $\beta_{m-1}$ with $\beta_{1}+\beta_{m-1}$. Since $m>2$, we can see that the vector group $\{\beta_1,...,\beta_{m-2},\beta_{m-1}\}$ is linearly independent, $\mathbf{u_1}\cdot \beta_i=0$ and $\beta_i\neq \mathbf{1},\ 1\leq i\leq m-1$.  It is easy to verify that $(\mathbf{0},\mathbf{e_1}),\ ...,(\mathbf{0},\mathbf{e_l}),(\beta_1,\mathbf{e_1}),...,(\beta_{m-1},\mathbf{e_1})\in H(\mathbf{u,}\ D_0)$, and  the vector group is linearly independent. By Lemma \ref{sn}, $\mathbf{c}(\mathbf{u})$ is minimal.

{\bf Case 3:} When $\mathbf{u_1}\neq\mathbf{0}$ and $\mathbf{u_2}\neq \mathbf{0}$, since $\mathbf{u_2}\neq\mathbf{0}$, there exist $\gamma_1,...,\gamma_{l-1}$, such that $\mathbf{u_2}\cdot \gamma_i={0}=-\mathbf{u_1}\cdot\mathbf{0},$ $i=1,...,l-1$, and there exists $\gamma_0$ such that $-\mathbf{u_2}\cdot \gamma_0\neq 0.$ As the discussion in Case 2, we know that there exists $\beta_1,...,\beta_{m}$ which is linearly independent, such that $\mathbf{u_1}\cdot \beta_i=-\mathbf{u_2}\cdot \gamma_0$ and $\beta_i\neq \mathbf{1}$. It is easy to verify that $(\mathbf{0},\gamma_1),\ ...,(\mathbf{0},\gamma_2),(\beta_1,\gamma_0),...,(\beta_{m},\gamma_0)\in H(\mathbf{u,}\ D_0)$, and  the vector group is linearly independent. By Lemma \ref{sn}, $\mathbf{c}(\mathbf{u})$ is minimal.

(2) When $m=2$ and $l\geq 2$, rank$(\mathcal{C}(D_0))=n$, let $\mathbf{u}=(1,1,\mathbf{0}) $ and $\mathbf{v}=(1,0,\mathbf{0})$. First we prove that $H(\mathbf{u},D_0)\subseteq H(\mathbf{v},D_0)$. For any $\alpha=(\beta,\gamma)\in H(\mathbf{u},D_0)$, $ \mathbf{u}\cdot \alpha=(1,1)\cdot \beta+\mathbf{0}\cdot \gamma=0$, we get $\beta=(0,0)$ or $(1,1)$. Since $\alpha=(\beta,\gamma)\in D_0$, we get $\beta=(0,0)$ and $\alpha\cdot\mathbf{v}=0$. Then $\alpha\in H(\mathbf{v},D_0)$ and $H(\mathbf{u},D_0)\subseteq H(\mathbf{v},D_0)$. Then by Proposition \ref{21}, we get $\mathbf{c(v)}\preceq \mathbf{c(u)}$. For $\alpha=(0,1,\mathbf{e_1})\in D_0$, $\mathbf{u}\cdot \alpha=1$, $\mathbf{v\cdot \alpha=0}$, so  $\mathbf{c(u)}\neq \mathbf{c(v)}$. It is easy to verify that $\mathbf{c(u)}\neq\mathbf{0}$ and  $\mathbf{c(v)}\neq \mathbf{0}$, thus $\mathbf{c(u)}$ is not a minimal codeword and  $\mathcal{C}(D_0)$ is not minimal.

(3) When $m\geq 2$ and $l=1$, rank$(\mathcal{C}(D_0))=n$. Let $\mathbf{u}=(\mathbf{0},1) $ and $\mathbf{v}=(\mathbf{e_1}+\mathbf{e_2},1)$. We prove that $H(\mathbf{u},D_0)\subseteq H(\mathbf{v},D_0)$. For any $\alpha=(\beta,\gamma)\in H(\mathbf{u},D_0)$, we know that $\mathbf{u}\cdot \alpha=\mathbf{0}\cdot \beta+{1}\cdot \gamma=0$, then $\gamma=0$,  it  conflicts with the fact that $\alpha\in D_0$. So $H(\mathbf{u},D_0)=\emptyset$ and $H(\mathbf{u},D_0)\subseteq H(\mathbf{v},D_0)$.  By Proposition \ref{21}, we get $\mathbf{c(v)}\preceq \mathbf{c(u)}$. For $\alpha=(\mathbf{e_1},1)\in D_0$, $\mathbf{u}\cdot \alpha=1$, $\mathbf{v\cdot \alpha=0}$, so  $\mathbf{c(u)}\neq \mathbf{c(v)}$. It is easy to verify that $\mathbf{c(u)}\neq\mathbf{0}$ and  $\mathbf{c(v)}\neq \mathbf{0}$, thus $\mathbf{c(u)}$ is not a minimal codeword and  $\mathcal{C}(D_0)$ is not minimal.

(4) When $m=1$ and $l\geq 1$, $D_0=\{(\beta,\gamma)\mid \beta= 0,\ \gamma\neq \mathbf{0}\}$, we can see rank$(\mathcal{C}(D_0))=n-1$. For any $\mathbf{u}=(u_1, \mathbf{u_2})\in \mathbb{F}_2^n$,
$$\mathbf{c(u)}=(u_1\cdot 0+\mathbf{u_2}\cdot \gamma)_{\gamma\neq \mathbf{0}}=(\mathbf{u_2}\cdot \gamma)_{\gamma\neq \mathbf{0}}\in \mathcal{C}(\mathbb{F}_2^l\setminus\mathbf\{{0}\}).$$
It is easy to see $\mathcal{C}(D_0)=\mathcal{C}(\mathbb{F}_2^l\setminus\mathbf\{{0}\})$ and then  $\mathcal{C}(D_0)$ is minimal.

This completes the proof.
\end{proof}

%By Theorem \ref{32} and Lemma \ref{31}, we have the following Corollary immediately.

Now we consider the minimality of $\mathcal{C}(D)$.
\begin{thm}\label{33}
Let $n=m+l,$ where $m,\ n,\ l\in \mathbb{N}^+$. \\
$(1)$ When $m>2$ and $l\geq 2$, $\mathcal{C}(D)$ is minimal.\\
$(2)$ When $m=2$ and $l\geq 2$, $\mathcal{C}(D)$ is  minimal if and only if for any $B_i$, $|B_i|<l,\ 1\leq i\leq t$.\\
$(3) $ When $m\geq2$ and $l=1$, $\mathcal{C}(D)$ is not minimal.\\
$(4)$  When $m=1$ and $l\geq 1$, there are three cases in the following:\\
\ \ $(i)$ If there exists $B_{i_0}$ such that $|B_{i_{0}}|=l$, $1\leq i_0\leq t$, then $\mathcal{C}(D)$ is minimal;\\
\ \ $(ii) $ If for all $B_i$, $|B_i|\leq l-1$, $1\leq i\leq t$,  and there exists $B_{i_0}$ such that $|B_{i_{0}}|=l-1,$ $1\leq i_0\leq t$, then $\mathcal{C}(D)$ is not minimal;\\
\ \ $(iii)$ If for all $B_i$, $|B_i|< l-1$, $1\leq i_0\leq t$, then $\mathcal{C}(D)$ is minimal.
\end{thm}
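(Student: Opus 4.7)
The plan is to apply Lemma~\ref{sn} to each case by computing, for every nonzero $\mathbf{u}\in\mathbb{F}_2^n$, the subspace $V(\mathbf{u},D)=\mathrm{Span}(H(\mathbf{u},D_0)\cup H(\mathbf{u},D_1))$ and comparing it to $H(\mathbf{u})$. It is convenient to set $\Gamma:=\{\gamma\in\mathbb{F}_2^l\setminus\{\mathbf{0}\}:\gamma\nsubseteq B_i\ \forall i\}$ so that $D_1=\{(\mathbf{1},\gamma):\gamma\in\Gamma\}$. Three elementary facts about $\Gamma$ drive the argument: $\Gamma=\emptyset$ iff some $|B_i|=l$; $\mathbf{1}\in\Gamma$ iff every $|B_i|<l$; and $\{\mathbf{1}\}\cup\{\mathbf{1}-\mathbf{e}_j:j\in[l]\}\subseteq\Gamma$ iff every $|B_i|\leq l-2$.

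Part (1) follows at once from Theorem~\ref{32}(1) and Lemma~\ref{31}, since $D_0\subseteq D$ and both have rank $n$. For the ``no'' directions of Parts (2), (3), and (4)(i), the hypothesis forces $\Gamma=\emptyset$ (in Part (3) because the unique nonempty $B_i\subseteq\{m+1\}$ is $\{m+1\}$ itself), so $D_1=\emptyset$, $D=D_0$, and the conclusion is read off from the corresponding item of Theorem~\ref{32}.

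For the ``yes'' directions I would argue as follows. In Part (2), if-direction, I would rerun the three cases of the proof of Theorem~\ref{32}(1), verify that Cases 1 and 3 still go through for $m=2$, and observe that in Case 2 the only $\mathbf{u}$ not handled is $\mathbf{u}=(\mathbf{1},\mathbf{0})$. For this $\mathbf{u}$, $H(\mathbf{u},D_0)$ spans $\{((0,0),\gamma):\gamma\in\mathbb{F}_2^l\}$ (dimension $l$), and $(\mathbf{1},\mathbf{1})\in D_1\cap H(\mathbf{u})$ lifts the dimension to $l+1=n-1$. In Part (4)(ii) I would witness non-minimality with $\mathbf{u}=(0,\mathbf{e}_{j_0})$ for any $j_0\in[l]\setminus B_{i_0}$: every $\gamma\in\Gamma$ satisfies $\gamma_{j_0}=1$ (since $\gamma\nsubseteq B_{i_0}=[l]\setminus\{j_0\}$), so $\mathbf{u}\cdot(1,\gamma)=1\neq 0$ and $H(\mathbf{u},D_1)=\emptyset$, giving $\dim V(\mathbf{u},D)=l-1<n-1$, which kills minimality by Lemma~\ref{sn}. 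In Part (4)(iii) I would split the nonzero $\mathbf{u}=(u_1,\mathbf{u_2})$ into three subcases; when $\mathbf{u_2}=\mathbf{0}$ (so $u_1=1$) we already have $V(\mathbf{u},D_0)=H(\mathbf{u})$, and when $\mathbf{u_2}\neq\mathbf{0}$ one needs $\gamma\in\Gamma$ with $\mathbf{u_2}\cdot\gamma$ equal to a prescribed value in $\mathbb{F}_2$. Either $\gamma=\mathbf{1}$ gives the desired value or, if the parity is wrong, $\gamma=\mathbf{1}-\mathbf{e}_j$ for any $j$ with $(\mathbf{u_2})_j=1$ does; both candidates lie in $\Gamma$ by $|B_i|\leq l-2$.

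The main obstacle is Part (4)(iii): unlike all the other cases where a single distinguished vector of $D_1$ suffices, here $\Gamma$ must be large enough to realize both parities of $\mathbf{u_2}\cdot\gamma$ for every nonzero $\mathbf{u_2}$. The tightness of the threshold $|B_i|\leq l-2$ is confirmed by Part (4)(ii) itself: the moment some $|B_i|=l-1$, all $\gamma\in\Gamma$ are forced to include the missing coordinate, and the required flexibility is lost. The rest of the proof is essentially bookkeeping once the set $\Gamma$ and the witness vectors are correctly identified.
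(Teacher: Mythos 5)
Your proposal is correct and, in outline, is the paper's own proof: parts (1), (3), (4)(i) and the ``only if'' half of (2) are reduced, exactly as in the paper, to Theorem \ref{32} together with Lemma \ref{31} by observing that the hypothesis forces $D_1=\emptyset$ (your set $\Gamma$ is just a convenient bookkeeping device for this); your identification of $\mathbf{u}=(\mathbf{1},\mathbf{0})$ as the unique codeword not already handled inside $D_0$ when $m=2$, repaired by $(\mathbf{1},\mathbf{1})\in D_1\cap H(\mathbf{u})$, matches the paper's Case 2 of part (2); and your parity argument in (4)(iii) --- take $\gamma_0=\mathbf{1}$ or $\gamma_0=\mathbf{1}-\mathbf{e}_j$ with $j\in{\rm Suppt}(\mathbf{u_2})$, both lying in $\Gamma$ because $|B_i|\leq l-2$, and adjoin $(1,\gamma_0)$ to a basis $(0,\gamma_1),\dots,(0,\gamma_{l-1})$ of $\{0\}\oplus\mathbf{u_2}^{\perp}$ --- is precisely the paper's Cases 1 and 3 there. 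Two local points deserve comment. First, in (4)(ii) you genuinely diverge from the paper: the paper exhibits an explicit covering pair, taking $\mathbf{v}=(1,\mathbf{e}_{j_0})$, proving $H(\mathbf{u},D)\subseteq H(\mathbf{v},D)$ and invoking Proposition \ref{21}, which then requires checking $\mathbf{c(u)}\neq\mathbf{c(v)}$ and both nonzero; you instead note $H(\mathbf{u},D_1)=\emptyset$ (every $\gamma\in\Gamma$ has $\gamma_{j_0}=1$) and conclude $\dim V(\mathbf{u},D)=l-1<n-1$, killing minimality by the dimension criterion of Lemma \ref{sn}. Your route is shorter and exposes the structural reason for failure; it is valid since ${\rm rank}(D)=n$ here ($\mathbf{1}\in\Gamma$), so $\mathbf{c(u)}\neq\mathbf{0}$. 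Second, a caution in the ``if'' half of (2): your claim that Case 3 of Theorem \ref{32}(1) ``still goes through'' for $m=2$ is not literally true as written there, because that construction requires $m$ linearly independent $\beta_i\neq\mathbf{1}$ sharing a common $\gamma_0$ with $\mathbf{u_1}\cdot\beta_i=\mathbf{u_2}\cdot\gamma_0=1$, and for $\mathbf{u_1}=(1,0)$ the solution set of $\mathbf{u_1}\cdot\beta=1$ is $\{(1,0),(1,1)\}$, which contains $\mathbf{1}$ and leaves only one admissible $\beta$. The paper repairs this (its Case 3 of Theorem \ref{33}(2)) by taking $\beta_1=(1,0)$, $\beta_2=(0,1)$ paired with possibly \emph{different} $\gamma_{01},\gamma_{02}$ satisfying $\mathbf{u_2}\cdot\gamma_{0i}=\mathbf{u_1}\cdot\beta_i$, using $l\geq 2$ to find a nonzero $\gamma$ with $\mathbf{u_2}\cdot\gamma=0$; your sketch would need the same easy adjustment, but this is a repairable imprecision, not a gap in the strategy.
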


\begin{proof}
$(1)$ When $m>2$ and $l\geq 2$, by Theorem \ref{32} and Lemma \ref{31}, the result is immediate.

$(2)$ $"\Leftarrow"$ For any $\mathbf{0}\neq \mathbf{u}=(\mathbf{u_1},\mathbf{u_2})\in \mathbb{F}_2^2\oplus\mathbb{F}_2^l$, we will prove the codeword $\mathbf{c}(\mathbf{u})\in \mathcal{C}(D)$ is minimal in three different cases.

{\bf Case 1:} When $\mathbf{u_1}=\mathbf{0}$ and $\mathbf{u_2}\neq\mathbf{0}$, the proof is the same as the Case 1 in Theorem \ref{32}$(1)$.

{\bf Case 2:}  When $\mathbf{u_1}\neq\mathbf{0}$ and $\mathbf{u_2}=\mathbf{0}$:

 if $\mathbf{u_1}=(1,1)$, set $\beta=(1,1)$,  it is easy to verify that $(\mathbf{0},\mathbf{e_1}),\ ...,(\mathbf{0},\mathbf{e_l}),(\beta,\mathbf{1})\in H(\mathbf{u},D)$, and  the vector group is linearly independent;

  if $\mathbf{u_1}=(1,0)$ or $(0,1)$, we set $\beta=(0,1)$ or $(1,0)$ respectively, we can verify that $(\mathbf{0},\mathbf{e_1}),\ ...,(\mathbf{0},\mathbf{e_l}),(\beta,\mathbf{e_1})\in H(\mathbf{u},D_0)$, and  the vector group is linearly independent. By Lemma \ref{sn}, $\mathbf{c}(\mathbf{u})$ is minimal.

{\bf Case 3:} When $\mathbf{u_1}\neq\mathbf{0}$ and $\mathbf{u_2}\neq \mathbf{0}$, since $\mathbf{u_2}\neq\mathbf{0}$, there exists $\gamma_1,...,\gamma_{l-1}$, such that $\mathbf{u_2}\cdot \gamma_i=0=-\mathbf{u_1}\cdot\mathbf{0},$ $i=1,...,l-1$.

 If $\mathbf{u_1}=(1,1)$, set $\beta_1=(1,0)$ and $\beta_2=(0,1)$. Since $\mathbf{u_2}\neq \mathbf{0}$, there exists $\gamma_0\neq \mathbf{0}$, such that $-\mathbf{u_2}\cdot \gamma_{0}=1= \mathbf{u_1}\cdot \beta_1=\mathbf{u_1}\cdot \beta_2$.
 We can verify that $(\mathbf{0},\gamma_1),\ ...,(\mathbf{0},\gamma_{l-1}),(\beta_1,\gamma_0),(\beta_2,\gamma_0)\in H(\mathbf{u},D_0)$, and  the vector group is linearly independent;

 If $\mathbf{u_1}=(1,0)$ or $(0,1)$, we set $\beta_1=(1,0)$ and $\beta_2=(0,1)$. Since $\mathbf{u_2}\neq \mathbf{0}$, there exists $\gamma_{01}\neq \mathbf{0}$ and $\gamma_{02}\neq \mathbf{0}$, such that $-\mathbf{u_2}\cdot \gamma_{01}=- \mathbf{u_1}\cdot \beta_1$ and $-\mathbf{u_2}\cdot \gamma_{02}=- \mathbf{u_1}\cdot \beta_2$.
 We can verify that $(\mathbf{0},\gamma_1),\ ...,(\mathbf{0},\gamma_{l-1}),(\beta_1,\gamma_{01}),(\beta_2,\gamma_{02})\in H(\mathbf{u},D_0)$, and  the vector group is linearly independent. By Lemma \ref{sn}, $\mathbf{c}(\mathbf{u})$ is minimal.

 $"\Rightarrow"$  If there exists $B_{i_0}$ such that $|B_{i_{0}}|=l$, then $D_1=\emptyset$ and $D=D_0$, by Theorem \ref{32}(2), $\mathcal{C}(D)=\mathcal{C}(D_0)$ is not minimal. The result follows.

 $(3)$ When $m\geq2$ and $l=1$, $D_1=\{(\mathbf{1},\gamma)\mid {\rm{supp}}(\gamma)\nsubseteq B_i,\ 1\leqslant i\leqslant t\}=\emptyset$, so $D=D_0$. By Theorem \ref{32}(3), $\mathcal{C}(D)=\mathcal{C}(D_0)$ is not minimal. The result follows.

 $(4)(i)$  When $m=1$ and $l\geq 1$,
if there exists $B_{i_0}$ such that $|B_{i_{0}}|=l$, $1\leq i_0\leq t$, then $D_1=\emptyset$. By Theorem \ref{32}(4), $\mathcal{C}(D)=\mathcal{C}(D_0)$ is minimal. The result follows.

$(4)(ii)$ Let $\mathbf{u}=(0,\mathbf{1}-B_{i_0})$. Since $|B_{i_0}|=l-1,$ there exists $\mathbf{e}_{j_0}$, such that $\mathbf{u}=(0,\mathbf{e}_{j_0})$. Let $\mathbf{v}=(1,\mathbf{e}_{j_0})$. For any $\alpha=(\beta,\gamma)\in H(\mathbf{u},D)$, $\mathbf{u}\cdot \alpha=0\cdot \beta+\mathbf{e}_{j_0}\cdot \gamma=0,$ we know that supp$(\gamma)\subseteq B_{i_0}$, and $\alpha \notin D_1$. Since $\alpha \in D$, we get $\alpha \in D_0$, then $\beta=0.$ So $\mathbf{v}\cdot \alpha=1\cdot\beta+\mathbf{e}_{j_0}\cdot \gamma=0$, $\alpha\in H(\mathbf{v},D)$. We get
$H(\mathbf{u},D)\subseteq H(\mathbf{v},D)$, by Proposition \ref{21},  $\mathbf{c(v)}\preceq \mathbf{c(u)}$. We set $\alpha_1=(0,\mathbf{e}_{j_0})\in D$, then $\mathbf{u}\cdot \alpha_1=1$, $\mathbf{v}\cdot \alpha_1=1$, so  $\mathbf{c(u)}\neq \mathbf{0}$ and $ \mathbf{c(v)}\neq \mathbf{0}$. For $\alpha_2=(1,\mathbf{1})\in D$, $\mathbf{u}\cdot \alpha_2=1$, $\mathbf{v}\cdot \alpha_2=0$,  $\mathbf{c(u)}\neq \mathbf{c(v)}$, thus $\mathbf{c(u)}$ is not a minimal codeword and  $\mathcal{C}(D)$ is not minimal.

$(4)(iii)$ For any $\mathbf{0}\neq \mathbf{u}=({u_1},\mathbf{u_2})\in \mathbb{F}_2\oplus\mathbb{F}_2^l$, we will prove the codeword $\mathbf{c}(\mathbf{u})\in \mathcal{C}(D_0)$ is minimal in three different cases.

{\bf Case 1:} When ${u_1}={0}$ and $\mathbf{u_2}\neq\mathbf{0}$, there exists $\gamma_1$, $\gamma_2$,...,$\gamma_{l-1}$ which is linearly independent, such that $\mathbf{u_2}\cdot {\gamma_i}=0$. (1) If wt$(\mathbf{u_2})$ is even, we set $\gamma_0=\mathbf{1}$; (2) if wt$(\mathbf{u_2})$ is odd, since $\mathbf{u_2}\neq\mathbf{0}$,  there exists $\gamma_0$ such that wt$(\gamma_0)=l-1$ and $\mathbf{u_2}\cdot {\gamma_0}=0$. It is easy to verify that $({0},\gamma_1),\ ...,({0},\gamma_{l-1}),(1,\gamma_0)\in H(\mathbf{u},D)$, and  the vector group is linearly independent. By Lemma \ref{sn}, $\mathbf{c}(\mathbf{u})$ is minimal.

{\bf Case 2:}  When ${u_1}\neq{0}$ and $\mathbf{u_2}=\mathbf{0}$, it is easy to verify that $({0},\mathbf{e}_1),\ ({0},\mathbf{e}_2),...,({0},\mathbf{e}_l)\in H(\mathbf{u},D_0)$, and  the vector group is linearly independent. By Lemma \ref{sn}, $\mathbf{c}(\mathbf{u})$ is minimal.

{\bf Case 3:} When ${u_1}\neq{0}$ and $\mathbf{u_2}\neq \mathbf{0}$, since $\mathbf{u_2}\neq\mathbf{0}$, there exists $\gamma_1,...,\gamma_{l-1}$ which is linearly independent, such that $\mathbf{u_2}\cdot \gamma_i=0$, $i=1,...,l-1$.
 (1) If wt$(\mathbf{u_2})$ is odd, we set $\gamma_0={1}$; (2) if wt$(\mathbf{u_2})$ is even, since $|B_i|<l-1$,  there exists $\gamma_0$ such that wt$(\gamma_0)=l-1$ and $\mathbf{u_2}\cdot {\gamma_0}=1$. It is easy to verify that $({0},\gamma_1),\ ...,({0},\gamma_{l-1}),(1,\gamma_0)\in H(\mathbf{u},D)$, and  the vector group is linearly independent. By Lemma \ref{sn}, $\mathbf{c}(\mathbf{u})$ is minimal.

 This completes the proof.

\end{proof}

\begin{remark}
Here we compare the results in Theorem \ref{33} with those in \cite{HKWY2020}. When the number of order ideals  $t=1$, \cite[Theorem 6.1(3)]{HKWY2020} determine the minimality in all cases. When $t=2$, only in two cases, the minimality is determined:(1) $|B_1|=|B_2|=1$, $1<m\leq n-2$, see \cite[Theorem 6.2]{HKWY2020}; (2) $B_1\bigcap B_2=\emptyset $, $|B_1|+|B_2|=n-m$,
and max$\{|B_1|,\ |B_2|\}\leq n-2$, see  \cite[Theorem 6.4]{HKWY2020}. While in Theorem \ref{33} in this paper, for all positive integer $t$, we determine the minimality in all cases.
\end{remark}

\section {\bf{Concluding remarks}}\label{section Concluding remarks}
In \cite{HKWY2020}, J. Y. Hyun, H. K. Kim, et al.  constructed some optimal and minimal binary linear codes generated by one or two order ideals in hierarchical posets of two levels, and used the algebraic method to determine the minimality of the linear codes in some cases. In this paper, we use the geometric method   in \cite{LW2019} to  determine the minimality of binary linear codes generated by any   order ideals in hierarchical posets with two levels in all cases.  Our results have  answered the open problem in the end of \cite{HKWY2020}. We generalize the cases with one or two orders in \cite{HKWY2020} to any orders   and determine the minimality of the linear codes completely. It is hardly to use the algebraic method in \cite{HKWY2020} to prove the results in this paper, while it is very easy to use the geometric method to prove them.  It is also interesting to consider the cases where hierachical posets with more than two levels.

 {}
\end{document}